\newcommand{\diff}{\mathrm{d}}
\def\beq{\begin{eqnarray}}
\def\eeq{\end{eqnarray}}
\newcommand{\nn}{\nonumber}
\newtheorem{lemma}{Lemma}[]
\newtheorem{theorem}{Theorem}[]
\newtheorem{corollary}{Corollary}[]
\begin{document}
\title{Analytic Continuation of the Doubly-periodic Barnes Zeta Function}

\author{Guglielmo Fucci\footnote{Electronic address: fuccig@ecu.edu}}
\affiliation{Department of Mathematics, East Carolina University, Greenville, NC 27858 USA}

\author{Klaus Kirsten\footnote{Electronic address: Klaus\textunderscore Kirsten@Baylor.edu}}
\affiliation{Department of Mathematics, Baylor University, Waco, TX 76798 USA}

\date{April 16, 2013}
\vspace{2cm}
\begin{abstract}

The aim of this work is to study the analytic continuation of the doubly-periodic Barnes zeta function.
By using a suitable complex integral representation as a starting point we find the meromorphic extension of the doubly periodic Barnes zeta function
to the entire complex plane in terms of a real integral containing the Hurwitz zeta function and the first Jacobi theta function.
These allow us to explicitly give expressions for the derivative at all non-positive integer points.

\end{abstract}
\maketitle

\section{Introduction}

The Barnes zeta function, introduced for the first time in \cite{barnes1901,barnes1904}, represents a higher dimensional generalization of the Hurwitz zeta function
\beq \zeta_H (s,a) = \sum_{n=0}^\infty (n+a)^{-s} \quad \quad \mbox{for }\Re s>1. \label{zetahur}\eeq
Namely, let $s\in\mathbb{C}$, $\mu\in\mathbb{R}_{+}$, and ${\bf r}\in\mathbb{R}^{d}_{+}$. For $\Re s>d$ the Barnes zeta function is defined through the series
\begin{equation}
  \zeta_{\mathcal{B}}(s,\mu |{\bf r})=\sum_{{\bf n}\in\mathbb{N}_{0}^{d}}(\mu+{\bf n}\cdot{\bf r})^{-s}\;,
\end{equation}
and it can be analytically continued in a unique way to a meromorphic function in the entire complex plane possessing only simple poles at $s=1,2,...,d.$
In this work we shall be interested in the meromorphic extension of a function closely related to the two dimensional Barnes zeta function.
Let $b,c\in\mathbb{R}_{+}$, $\Omega_{m,n}=ibm+cn$ with $m,n\in\mathbb{Z}$, and $a\in\mathbb{C}\backslash\Omega_{m,n}$. For $\Re s>2$ we consider the following zeta function
\beq
\zeta (s,a,b,c)= \sum_{(m,\,n)\in\mathbb{Z}^{2}}(a+\mathrm{i}bm+cn)^{-s}\; ,\label{1}
\eeq
which is the kind of zeta function resulting from Dirac operators on the two-torus as considered in generalized Thirring models \cite{sachs96}.
The zeta function defined above is doubly-periodic with respect to the variable $a$. In fact for $k\in\mathbb{Z}$ one has the relations
$\zeta(s,a+kc,b,c)=\zeta(s,a,b,c)$ and $\zeta(s,a+\mathrm{i}kb,b,c)=\zeta(s,a,b,c)$, and due to this double-periodicity we assume, without loss
of generality, that $0<\Re a <c$ and $0<\Im a <b$. Since the zeta function (\ref{1}) is analytic in the semi-plane $\Re s>2$ and the ratio
of the two periods $c$ and $\mathrm{i}b$ is not a real number, one can conclude that (\ref{1}) belongs to the class of elliptic functions \cite{whitt27}.

The main idea of the present work is to represent the doubly-periodic Barnes zeta function (\ref{1}) for $\Re s>2$ in terms of a contour integral in the complex plane.
The desired analytic continuation of (\ref{1}) to the region $\Re s\leq 2$ is then achieved by a suitable deformation of the integration contour.
This process yields an expression for (\ref{1}) valid in the entire complex plane in terms of an integral over the interval $[0,1]$ of the Hurwitz zeta function and
the logarithmic derivative of the first Jacobi theta function. The analytically continued expression for $\zeta(s,a,b,c)$ will allow us to very easily
compute its values at all integer points, $s\in\mathbb{Z}$. In addition, we will also provide an explicit expression for the derivative of $\zeta(s,a,b,c)$ with respect to $s$
at all non-positive integer points.

We would like to point out that one of the main advantages of our study of the analytic continuation
of (\ref{1}) is that its double-periodicity property remains {\it manifest} in all the formulae. This is an aesthetically pleasing feature that can also be desirable if one wishes to
implement these expressions in a computer program for numerical evaluations.
The study of a zeta function closely related to the doubly-periodic Barnes zeta function considered here has appeared in \cite{elizalde07} where a method for obtaining its analytic continuation
was used which differs from the one we employ in this work.

The outline of the paper is as follows. In the next section we construct a contour integral representation for $\zeta(s,a,b,c)$ valid for $\Re s>2$.
From this representation we perform, in Section \ref{sec3}, the analytic continuation in the entire complex plane. Section \ref{sec4}
is devoted to the explicit computation of the derivative of $\zeta(s,a,b,c)$ with respect to the first variable at all negative integer points and at $s=0$.

\section{Contour Integral Representation of $\zeta (s,a,b,c)$}

As we have already mentioned in the Introduction, the main idea of our approach is to represent the doubly-periodic Barnes zeta function (\ref{1})
in terms of a contour integral. Since by assumption $c\neq 0$ we introduce, for convenience, the function
\beq\label{1b}
f_{m}(n,s) = \frac 1 {(\alpha_{m} +n)^s}\;,
\eeq
with $n\in\mathbb{Z}$ and $\alpha_{m} = \frac a c +\textrm{i}\frac b c m\in \mathbb{C}\backslash\{0\}$, where $m\in\mathbb{Z}$.
Obviously, in terms of the newly introduced function $f_{m}(n,s)$, the zeta function (\ref{1}) reads
\begin{equation}\label{1a}
  \zeta(s,a,b,c)=c^{-s}\sum_{m\in\mathbb{Z}}\sum_{n\in\mathbb{Z}}f_{m}(n,s)\;.
\end{equation}
By utilizing Cauchy's residue theorem we rewrite the sum over the index $n$ in (\ref{1a}) as a contour integral. More precisely one has
\beq\label{1c}
\sum_{n\in\mathbb{Z}}\frac 1 {(\alpha_{m} +n)^s}=
\sum_{n\in\mathbb{Z}} f_{m}(n,s) =\frac 1 {2\pi {\rm i}} \int_\Gamma \diff z \,\,
f_{m}(z,s) \,\, \pi \cot (\pi z)\;,
\eeq
where $\Gamma$ is a contour that encloses counterclockwise all the (simple) poles of the function $\pi \cot (\pi z)$.
Let us point out that the representation (\ref{1c}) is well defined in the region $\Re s>2$.

Before specifying the integration contour $\Gamma$ in detail, we would like to observe that the function $f_{m}(z,s)$, obtained from (\ref{1b}) by replacing $n$ with $z$,
possesses branch cuts extending from the points $z=-\alpha_{m}$. The exact position of the cut will depend explicitly on the
summation index $m$. First, note that from the assumptions stated below (\ref{1}) one obtains the inequalities $b/c>0$, $\Re (a/c)>0$, and $\Im (a/c)>0$.
This allows us to conclude that $\Re\alpha_{m}>0$ for all $m$, $\Im\alpha_{m}>0$ for $m\geq 0$ and $\Im\alpha_{m}<0$ for $m<0$.
The last remark shows that the cut lies in the lower half complex plane when
$m\geq 0$ and in the upper half complex plane when $m\leq -1$. The contour $\Gamma$
has to be chosen in such a way to enclose only the poles of $\cot (\pi z)$ but not the branch points of the function $f_{m}(z,s)$. More precisely, the contour is the union $\Gamma=\Gamma_+\cup\Gamma_-$ where $\Gamma_{+}$ satisfies the property $0<\Im \Gamma_{+}<-\Im \alpha_{m}$ for $m\leq -1$ while $\Gamma_{-}$ satisfies $\Im \alpha_{m}<\Im \Gamma_{-}<0$ for $m\geq 0$ (which simply means
that the contour is closer to the real axis than the cut). The contour $\Gamma$ is depicted in Figure \ref{fig1} with $-\alpha_{\geq 0}$ and $-\alpha_{\leq -1}$ denoting representatives of the set of branch points with index $m\geq 0$ and $m\leq -1$, respectively.

\begin{figure}[h!]
\centering
\includegraphics[scale=0.6,trim=0cm 6cm 0cm 8cm, clip=true, angle=0]{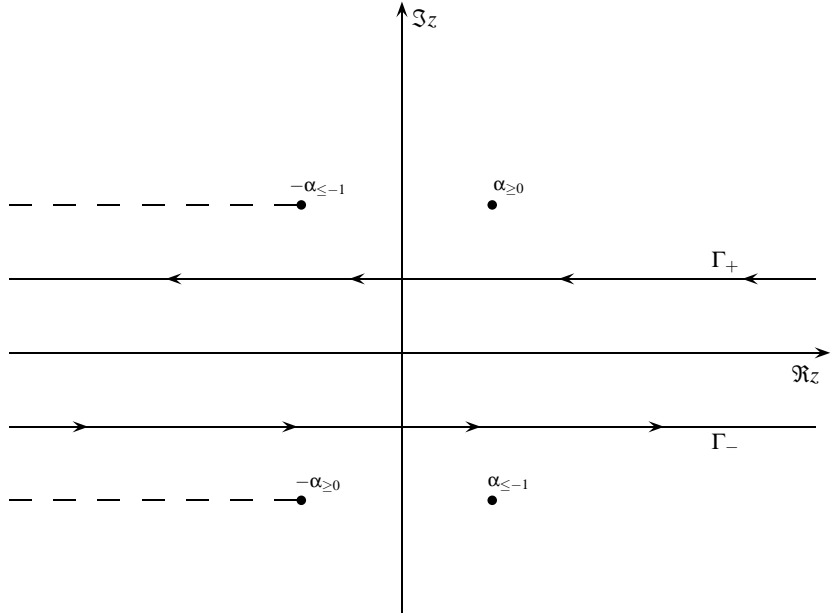}
\caption{Integration contour used for the representation (\ref{1c})}\label{fig1}
\end{figure}

With the contour of integration completely determined we can express the left hand side of (\ref{1c}) as a sum of two contributions
\begin{equation}\label{1d}
  \sum_{n\in\mathbb{Z}}\frac 1 {(\alpha_{m} +n)^s}=\frac 1 {2\pi {\rm i}} \int_{\Gamma_{+}} \diff z \,\,
f_{m}(z,s) \,\, \pi \cot (\pi z)+\frac 1 {2\pi {\rm i}} \int_{\Gamma_{-}} \diff z \,\,
f_{m}(z,s) \,\, \pi \cot (\pi z)\;.
\end{equation}

The next step of our approach is to rewrite the integrand in (\ref{1d}) in a way that will allow a separate treatment of the integral over
$\Gamma_{+}$ and over $\Gamma_{-}$. By utilizing the complex exponential representation of the function $\cot (\pi z)$ and after some straightforward
algebraic manipulations we obtain
\begin{equation}\label{rep1}
  \cot (\pi z)=-{\rm i} \left(1+\frac 2 {e^{-2\pi {\rm i}z}-1} \right)\;,\quad\textrm{for}\quad \Im z>0\;,
\end{equation}
and
\begin{equation}\label{rep2}
  \cot (\pi z)={\rm i}\left(1+\frac 2 {e^{2\pi {\rm i}z}-1}\right)\;,\quad\textrm{for}\quad \Im z<0\;.
\end{equation}
Since $\Im \Gamma_{+}>0$ and $\Im \Gamma_{-}<0$, we use the representation (\ref{rep1}) for the integral over $\Gamma_{+}$ and the
representation (\ref{rep2}) for the integral over $\Gamma_{-}$. Proceeding in this fashion leads to the results
\beq\label{2}
\frac 1 {2\pi {\rm i}} \int_{\Gamma_+} \diff z \,\, f_{m}(z,s)  \,\,
    \pi \cot (\pi z) = -\frac 1 2 \int_{\Gamma_+} \diff z \,\, f_{m}(z,s)
 -\int_{\Gamma_+} \diff z \,\, \frac{f_{m}(z,s) }{e^{-2\pi {\rm i}z}-1} , 
\eeq
and
\beq\label{2a}
\frac 1 {2\pi {\rm i}} \int_{\Gamma_-} \diff z \,\, f_{m}(z,s)  \,\,
    \pi \cot (\pi z) = \frac 1 2 \int_{\Gamma_-} \diff z \,\, f_{m}(z,s)
 +\int_{\Gamma_-} \diff z \,\, \frac{f_{m}(z,s) }{e^{2\pi {\rm i}z}-1} . 
\eeq
The next step is the deformation of the integration contours which is the subject of the following Lemma.
\begin{lemma}\label{lemma1}
  Let $\Gamma=\Gamma_{+}\cup\Gamma_{-}$ be the integration contour described above, and $z=u+\alpha_{m}$ with $m\in\mathbb{Z}$.
  Then, in the semi-plane $\Re s>2$, one obtains
  \begin{equation}\label{claim}
    \frac 1 {2\pi {\rm i}} \int_\Gamma \diff z \,\,
f_{m}(z,s) \,\, \pi \cot (\pi z)=-\int_\gamma \diff u \,\, \frac{u^{-s}}{e^{-2\pi {\rm i}(u-\alpha_{\leq-1})}-1}+\int_\gamma \diff u \,\, \frac{u^{-s}}{e^{2\pi {\rm i}(u-\alpha_{\geq 0})}-1}\;,
  \end{equation}
  where $\gamma$ is a contour enclosing in the clockwise direction the negative real axis including the point $u=0$.
\end{lemma}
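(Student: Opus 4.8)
The plan is to start from the decomposition (\ref{1d}) combined with (\ref{2}) and (\ref{2a}), so that the left-hand side of (\ref{claim}) is written as a sum of four integrals: the two ``constant'' pieces $-\frac{1}{2}\int_{\Gamma_+}f_m(z,s)\,dz$ and $+\frac{1}{2}\int_{\Gamma_-}f_m(z,s)\,dz$, together with the two ``exponential'' pieces carrying the factors $(e^{\mp 2\pi iz}-1)^{-1}$. The first task is to show that the two constant pieces drop out: parametrising $\Gamma_\pm$ as horizontal lines $z=x\pm i\epsilon$ and setting $w=\alpha_m+z$, each of them becomes $\int w^{-s}\,dw$ along a horizontal line which, by the sign analysis of $\Im\alpha_m$ recorded below (\ref{1}), never meets the cut of $w^{-s}$ on the negative real axis. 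Since $\Re s>2>1$ the antiderivative $w^{1-s}/(1-s)$ vanishes at both ends of the line, hence $\int_{\Gamma_\pm}f_m(z,s)\,dz=0$.

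Next I would introduce $u=z+\alpha_m$ in the two remaining exponential integrals. This replaces $f_m(z,s)$ by $u^{-s}$, moves the branch point $z=-\alpha_m$ to $u=0$, straightens the cut onto the negative real axis, and sends the poles at $z\in\mathbb{Z}$ to the horizontal line $\Im u=\Im\alpha_m$; the shifted contours $\tilde\Gamma_\pm$ are horizontal lines lying strictly between this pole line and the real $u$-axis. The core of the argument is then a contour deformation carried out according to the sign of $m$. For $m\geq 0$ the first integrand $u^{-s}(e^{-2\pi i(u-\alpha_m)}-1)^{-1}$ decays as $\Im u\to+\infty$, so pushing $\tilde\Gamma_+$ upward — crossing neither a pole nor the cut, since both lie below it — kills that contribution, while $\tilde\Gamma_-$ is pushed downward onto the Hankel contour $\gamma$ around the negative real axis, leaving precisely the second term of (\ref{claim}). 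For $m\leq -1$ the roles are reversed and one is left with the first term. Thus for each $m$ only the term carrying the matching label $\alpha_{\geq 0}$ or $\alpha_{\leq -1}$ survives, which is how the right-hand side of (\ref{claim}) is to be read once one eventually sums over $m$.

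The step demanding the most care — and the main obstacle — is the rigorous control of these deformations. I would need to check that the large arcs closing the contours contribute nothing (they are $O(R^{1-\Re s})\to 0$, which is exactly where $\Re s>2$ is used), that no pole of $(e^{\mp 2\pi i(u-\alpha_m)}-1)^{-1}$ is swept across during the shift (this is guaranteed precisely by the prescribed placement of $\Gamma_\pm$ relative to the cut), and that the integral over $\gamma$ is well defined at $u=0$ in spite of the singularity of $u^{-s}$: this holds because $\alpha_m\notin\mathbb{Z}$, so the remaining factor is analytic on and near the origin and $\gamma$ may keep a finite loop there. Finally I would track orientations carefully to confirm that the surviving contour is traversed clockwise and that the signs agree with (\ref{claim}). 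As an internal consistency check, the identity $(e^{-w}-1)^{-1}=-1-(e^{w}-1)^{-1}$ together with $\int_\gamma u^{-s}\,du=0$ (again valid for $\Re s>1$) shows that the two representations coincide, so it is immaterial which label is attached to which range of $m$.
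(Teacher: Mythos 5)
Your proposal is correct and follows essentially the same route as the paper's proof: kill the two ``constant'' pieces of the cotangent decomposition, then for each sign of $m$ push the contour piece on the cut-free side off to infinity using the exponential decay of $(e^{\mp 2\pi {\rm i}z}-1)^{-1}$, and wrap the remaining piece around the branch cut to produce the clockwise contour $\gamma$. The only differences are cosmetic: you substitute $u=z+\alpha_m$ before deforming rather than after, you justify $\int_{\Gamma_\pm}f_m(z,s)\,\diff z=0$ by an explicit antiderivative argument where the paper simply asserts it, and your closing consistency check (that the two terms of (\ref{claim}) coincide for any fixed $\alpha_m$, via $(e^{-w}-1)^{-1}=-1-(e^{w}-1)^{-1}$ and $\int_\gamma u^{-s}\,\diff u=0$) is a correct observation not made in the paper.
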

\begin{proof} The proof of this result is based on a suitable deformation of the integration contour $\Gamma$.
Before deforming the contour, we focus on the first term on the left hand side of (\ref{2}) and (\ref{2a}). From the definition of the function $f_{m}(z,s)$ in
(\ref{1b}) we can write
\begin{equation}
  \int_{\Gamma_+} \diff z \,\, f_{m}(z,s) \,\, = \int_{\infty}^{-\infty}\diff x \,\,
   (x+\alpha_{m}+{\rm i}\Im\Gamma_{+})^{-s}\;.
\end{equation}
Since, by assumption, the analysis is restricted to the region $\Re s>2$ and since $\Re\alpha_{m}>0$ one can conclude that for all $m$
\begin{equation}
  \int_{\Gamma_+} \diff z \,\, f_{m}(z,s)=0\;.
\end{equation}
By using a similar argument one can prove that for all $m$ the following relation holds
\beq
\int_{\Gamma_-} \diff z \,\, f_{m}(z,s)=\int_{-\infty}^{+\infty}\diff x \,\,
   (x+\alpha_{m}-{\rm i}\Im\Gamma_{-})^{-s}=0\;.
\eeq

Let us consider next the second integral on the right hand side of (\ref{2}), which can be rewritten as
\begin{equation}
 \int_{\Gamma_+} \diff z \,\, \frac{f_{m}(z,s) }{e^{-2\pi {\rm i}z}-1}=\int_{\infty}^{-\infty}\diff x\,\,(x+\alpha_{m}+{\rm i}\Im\Gamma_{+})^{-s}\frac{1}{e^{-2\pi{\rm i}(x+\alpha_{m})}e^{2\pi\Im\Gamma_{+}}-1}\;,
\end{equation}
and is convergent for $\Re s>2$. If the branch cut extends from the points $-\alpha_{\geq 0}$, which simply means that it lies in the lower half plane,
the contour $\Gamma_+$ can be shifted away to infinity in the upper half plane and does not
contribute. In fact
\begin{eqnarray}
 \left|(x+\alpha_{\geq 0}+{\rm i}\Im\Gamma_{+})^{-s}\frac{1}{e^{-2\pi{\rm i}(x+\alpha_{\geq 0})}e^{2\pi\Im\Gamma_{+}}-1}\right|&\leq&
\frac{1}{e^{2\pi\Im\Gamma_{+}}-1}\left[(x+\alpha_{\geq 0})^{2}+(\Im\Gamma_{+})^{2}\right]^{-\frac{s}{2}}\longrightarrow 0\;,
\end{eqnarray}
as $\Im\Gamma_{+}\to\infty$. If, on the other hand, the branch cut extends from the points $-\alpha_{\leq-1}$ in the upper half plane, we shift the contour $\Gamma_{+}$ around the branch cut
as shown in Figure \ref{fig2}.

The second integral over $\Gamma_{-}$ in (\ref{2a}) is convergent for $\Re s>2$ and can be expressed as
\begin{equation}\label{5}
\int_{\Gamma_-} \diff z \,\, \frac{f_{m}(z,s) }{e^{2\pi {\rm i}z}-1}=\int_{-\infty}^{\infty}\diff x\,\,(x+\alpha_{m}-{\rm i}\Im\Gamma_{-})^{-s}\frac{1}{e^{2\pi{\rm i}(x+\alpha_{m})}e^{-2\pi\Im\Gamma_{-}}-1}\;.
\end{equation}
By using arguments similar to the ones outlined above one can prove that if the cut extends from the branch points $-\alpha_{\leq-1}$ in the
upper half plane, then we can shift the contour $\Gamma_{-}$ away to infinity, namely $\Im\Gamma_{-}\to-\infty$, and the integral (\ref{5}) vanishes.
If, instead, the branch cut extends from the points $-\alpha_{\geq0}$, in the lower half plane, we shift the contour $\Gamma_{-}$
as shown in Figure \ref{fig2}.

\begin{figure}[h!]
\centering
\includegraphics[scale=0.6,trim=0cm 6cm 0cm 8cm, clip=true, angle=0]{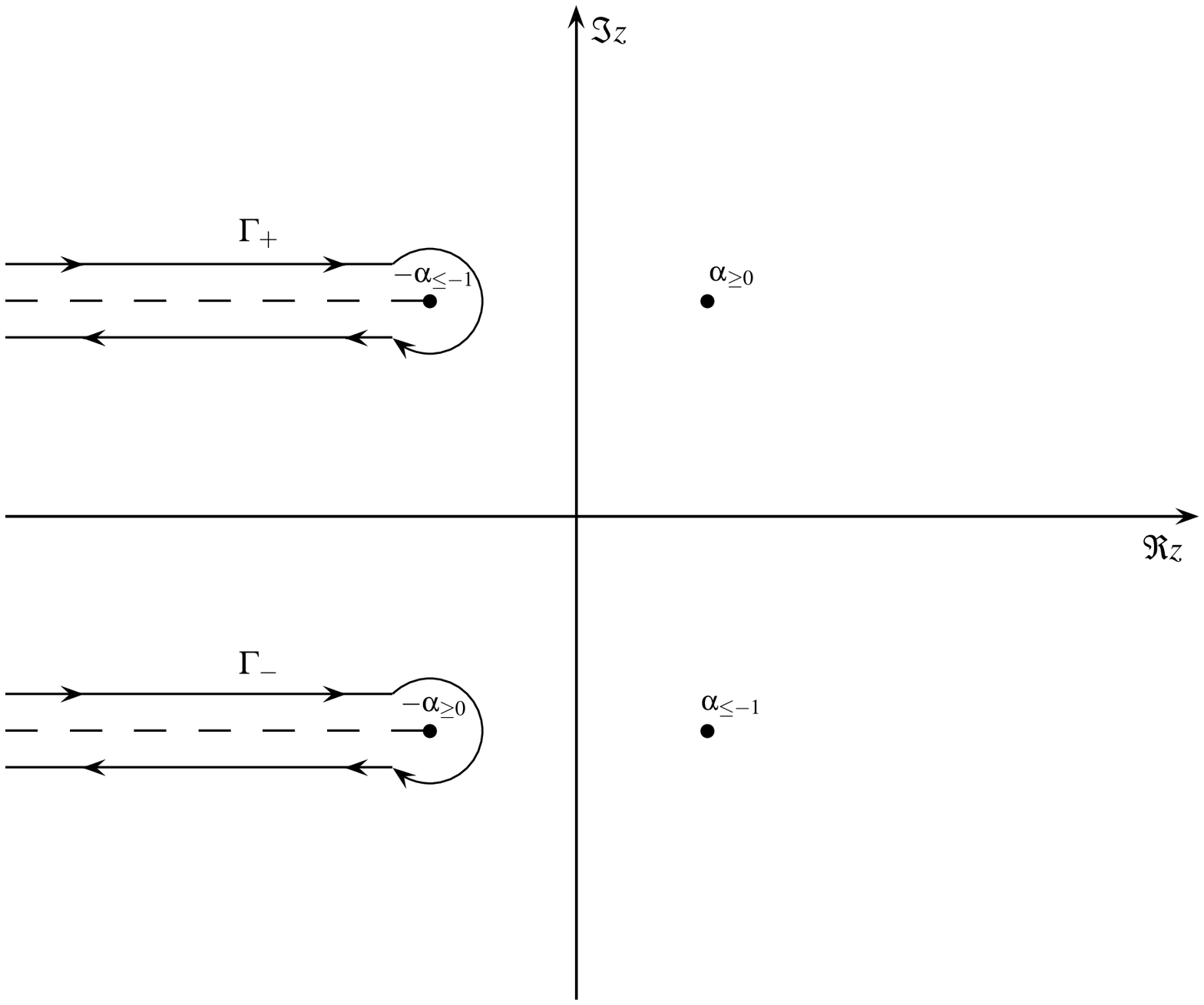}
\caption{}\label{fig2}
\end{figure}

We can therefore conclude that for $m\leq-1$ only the integral over the deformed $\Gamma_{+}$ gives a non-vanishing contribution
and by making the substitution $u=z+\alpha_{\leq-1}$ we obtain
\begin{equation}\label{6}
  \frac 1 {2\pi {\rm i}} \int_{\Gamma_+} \diff z \,\, f_{m}(z,s)  \,\,
    \pi \cot (\pi z) =-\int_\gamma \diff u \,\, \frac{u^{-s}}{e^{-2\pi {\rm i}(u-\alpha_{\leq-1})}-1}\;,
\end{equation}
where $\gamma$ is a contour enclosing in the clockwise direction the negative real axis including the point $u=0$ as shown in Figure \ref{fig3}.
By using a similar argument, when $m\geq0$, we obtain
\begin{equation}\label{7}
  \frac 1 {2\pi {\rm i}} \int_{\Gamma_-} \diff z \,\, f_{m}(z,s)  \,\,
    \pi \cot (\pi z) =\int_\gamma \diff u \,\, \frac{u^{-s}}{e^{2\pi {\rm i}(u-\alpha_{\geq 0})}-1}\;.
\end{equation}
By substituting the results (\ref{6}) and (\ref{7}) in the relation (\ref{1d}) we find the claim (\ref{claim}).
\end{proof}

\begin{figure}[h!]
\centering
\includegraphics[scale=0.6,trim=0cm 6cm 0cm 12cm, clip=true, angle=0]{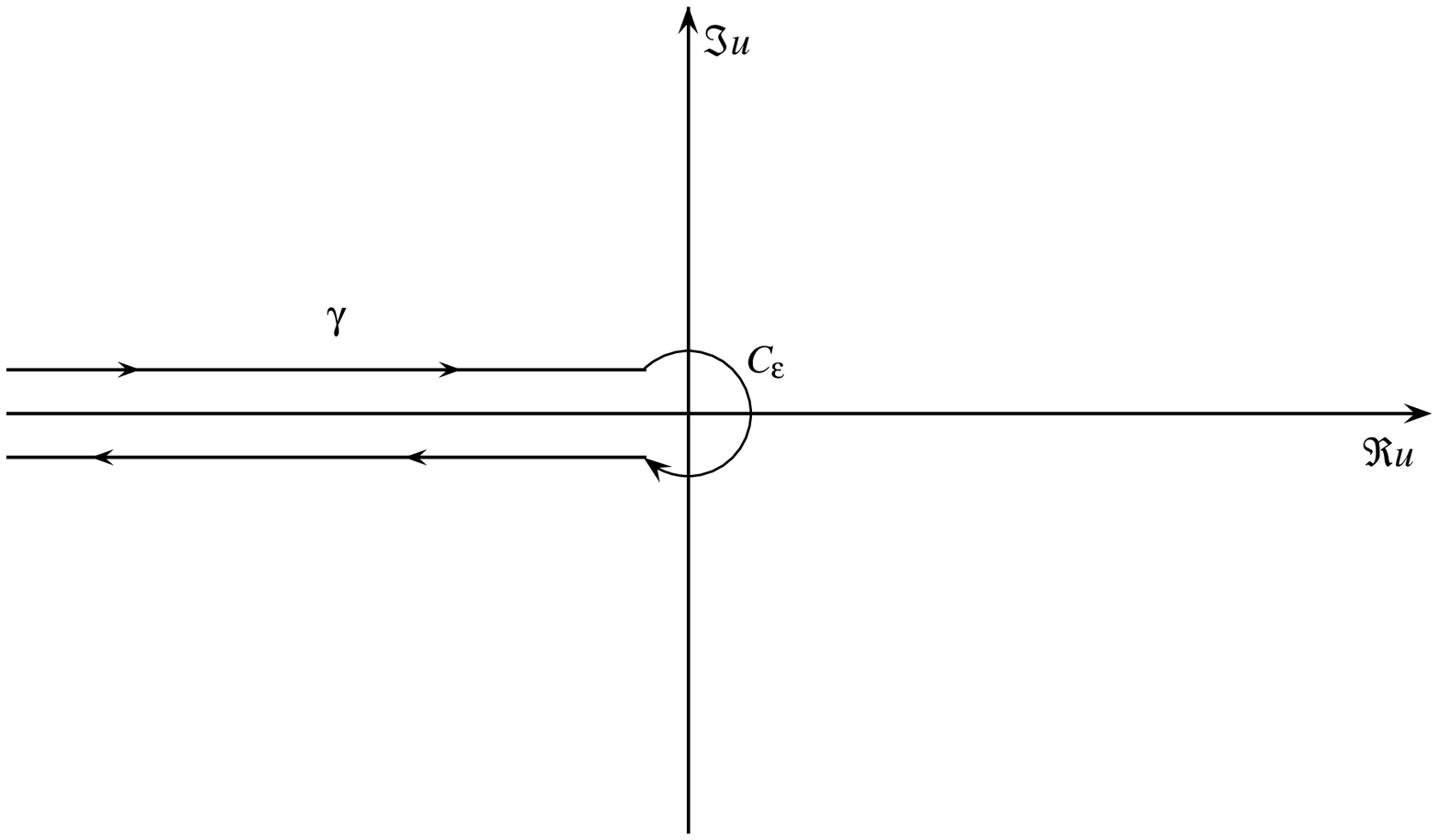}
\caption{}\label{fig3}
\end{figure}
Lemma 1 immediately allows to rewrite the doubly-periodic Barnes zeta function. First note, that
from the expressions (\ref{1b}) and (\ref{1a}) we can write
\begin{equation}\label{8}
  \zeta(s,a,b,c)=\frac{c^{-s}}{2\pi}\sum_{m\in\mathbb{Z}}\int_{\Gamma}\diff z\,f_{m}(s,z)\cot(\pi z)\; ,
\end{equation}
which by using
Lemma \ref{lemma1} gives
\beq\label{9}
\zeta  (s,a,b,c)&=&-c^{-s} \sum_{m=-1}^{-\infty} \int_\gamma \diff u \,\,
     \frac{u^{-s}}{e^{-2\pi {\rm i} \left(u-\frac{a}{c}-{\rm i}\frac{b}{c}m\right)}-1}+ c^{-s} \sum_{m=0}^\infty \int_\gamma \diff u \,\,
       \frac{u^{-s}}{e^{2\pi {\rm i} \left(u-\frac{a}{c}-{\rm i}\frac{b}{c}m\right)}-1}  \nn\\
   &=& c^{-s} \int_\gamma \diff u \,\,
        \frac{u^{-s}}{e^{2\pi {\rm i} \left(u-\frac{a}{c}\right)}-1}\nonumber\\
        &+& c^{-s} \sum_{m=1}^{\infty}\int_\gamma \diff u \,\,
        u^{-s}\left\{\frac 1 {e^{2\pi {\rm i}\left(u-\frac{a}{c}\right)+2\pi \frac{b}{c}m}-1}-\frac 1 {e^{-2\pi {\rm i}\left(u-\frac{a}{c}\right)+2\pi \frac{b}{c}m}-1} \right\}\;,
\eeq
where the integral in the second line represents the contribution due to $m=0$.
By introducing a new variable $q\in\mathbb{R}$ defined as $q=e^{-\pi b/c}$ and by combining the two exponentials in the last line of (\ref{9}) we find the expression
\beq\label{10}
\zeta (s,a,b,c) &=& c^{-s} \int_\gamma \diff u \,\, \frac{u^{-s}}
              {e^{2\pi i(u-a/c)}-1} \nn\\
     & &+c^{-s}\sum_{m=1}^\infty \int_\gamma \diff u \,\,u^{-s}
     \left[-2{\rm i} \sin \left( 2\pi \left[
      u-\frac a c\right]\right) \frac{q^{2m}}
         {1-2\cos \left( 2\pi \left[u-\frac a c\right]\right) q^{2m}
           +q^{4m}}\right]\;,
\eeq
which, once again, is valid for $\Re s>2$. The second integral appearing in (\ref{10}), although rather cumbersome,
can be expressed in terms of a simple special function.
In fact, let $\tau\in\mathbb{C}$ with $|\tau|<1$ and $z\in\mathbb{C}$; the first
Jacobi theta function $\vartheta_1 (z;\tau)$ has the following representation as an infinite product \cite{gradshtein07}
\beq\label{11}
\vartheta_1 (z;\tau) = 2 \tau^{1/4} \sin z \prod_{n=1}^\infty \left(1-2\tau^{2n}\cos (2z)
          +\tau^{4n}\right) \left(1-\tau^{2n}\right)\;.
\eeq
By taking the derivative of the logarithm of (\ref{11}) and by using $q$ introduced below (\ref{9}) for $\tau$, we obtain the formula \cite{erdelyi53}
\beq\label{12}
\frac \diff {\diff z} \ln \vartheta_1 (z;q) = \frac{\vartheta^{\prime}_{1}(z;q)}{\vartheta_{1}(z;q)} = \cot z +4 \sin (2z) \sum_{n=1}^\infty
      \frac{q^{2n}}{1-2q^{2n}\cos(2z) +q^{4n}}\;.
\eeq
Since $|q|<1$, for any finite $z$ the sum in (\ref{12}) is absolutely convergent \cite{shen}. This implies, in particular, that
the convergence of the series in (\ref{12}) is uniform in $z$ in the region $|\Im z|<\pi b/c$. The last remark justifies the interchange
of the sum and the integral in (\ref{10}) to obtain, by using the expression (\ref{12}) with the substitution $z\to\pi(u-a/c)$, the representation
\beq\label{14}
\zeta (s,a,b,c) &=& c^{-s} \int_\gamma \diff u \frac{u^{-s}}
       {e^{2\pi i (u-a/c)}-1}+\frac{{\rm i}}{2}c^{-s}\int_\gamma \diff u \,\, u^{-s}\cot \left( \pi \left[u-\frac a c \right]\right)\nn\\
       & &-\frac {\rm i} {2\pi} c^{-s} \int_\gamma \diff u \,\, u^{-s}
         \frac \diff {\diff u} \ln \vartheta_1 \left( \pi \left[u-\frac a c\right]
        ;q\right)
        \;,
\eeq
where the contour $\gamma$ is chosen so that $|\Im\gamma|<\pi b/c$ in order to allow the use of (\ref{12}). By combining the first two integrals in (\ref{14}) and by using the relation
\begin{equation}
  \frac{1}{e^{2\pi{\rm i}x}-1}+\frac{{\rm i}}{2}\cot\pi x=-\frac{1}{2}\;,
\end{equation}
we find
\beq\label{15}
\zeta (s,a,b,c) = -\frac {\rm i} {2\pi} c^{-s} \int_\gamma \diff u \,\, u^{-s}
    \frac \diff {\diff u} \ln \vartheta _1 \left( \pi \left[ u-\frac a c \right]
     ;q\right)-\frac{c^{-s}}{2}\int_\gamma \diff u \,\, u^{-s}\;.
\eeq
Since $\Re s>2$, the second integral in (\ref{15}) gives a vanishing contribution and we are left with the following compact representation
\beq \label{16}
\zeta (s,a,b,c) =-\frac {\rm i} {2\pi} c^{-s} \int_\gamma \diff u \,\, u^{-s}
\frac \diff {\diff u} \ln \vartheta_1 \left( \pi \left[u-\frac a c\right]
;q\right)\;.
\eeq
We would like to point out that the last integral representation preserves the double-periodicity property of the original sum in (\ref{1}).
In fact, let $m, n\in\mathbb{Z}$. By using (\ref{16}) we write
\begin{equation}\label{16a}
\zeta(s,a+nc+{\rm i}mb,b,c)=-\frac {\rm i} {2\pi} c^{-s} \int_\gamma \diff u \,\, u^{-s}
\frac \diff {\diff u} \ln \vartheta_1 \left( \pi \left[u-\frac a c\right]-n\pi-{\rm i}m\pi\frac{b}{c}
;q\right)\;.
\end{equation}
Now, let $z_{m,\,n}=(m+n\tau)\pi$  denote the vertices of the fundamental parallelogram in the $z$-plane. Then the first Jacobi theta function
is quasi-periodic on the lattice \cite{erdelyi53}
\begin{equation}\label{prop1}
  \vartheta_{1}(z+(m+n\tau)\pi;\tau)=(-1)^{m+n}e^{-{\rm i}(2nz+\pi n^{2}\tau)}\vartheta_{1}(z;\tau)\;.
\end{equation}
By utilizing (\ref{prop1}) in (\ref{16a}) we obtain
\begin{equation}
  \zeta(s,a+nc+{\rm i}mb,b,c)=-nc^{-s}\int_\gamma\diff u \,\, u^{-s}+\zeta(s,a,b,c)=\zeta(s,a,b,c)\;,
\end{equation}
where, as before, the last equality follows from the fact that since $\Re s>2$ the integral over $\gamma$ vanishes.

The representation (\ref{16}) allows to easily compute $\zeta (s,a,b,c)$ for $s=n$, $n\in\mathbb{N}$, $n\geq 2$. For these values of $s$, the contour encloses a singularity at $u=0$ of order $n$ and the
residue theorem immediately shows
\begin{eqnarray}\label{18}
 \zeta (n,a,b,c) &=& -\frac {\rm i} {2\pi} c^{-n} \int_{C_{\varepsilon}} \diff u \,\, u^{-n}
\frac \diff {\diff u} \ln \vartheta_1 \left( \pi \left[u-\frac a c\right];q\right)\nn\\
&=&-\left(\frac \pi c \right) ^n \frac 1 {(n-1)!}
      \frac{\diff^n}{\diff u^n} \ln \vartheta_1 (u;q) \Big|_{u=-\pi \frac{a}{c}} \nn\\
&=&\left(\frac \pi c \right) ^n \frac 1 {(n-1)!}
      \frac{\diff^n}{\diff u^n} \ln \vartheta_1 (u;q) \Big|_{u=\pi \frac{a}{c}} \; .
\end{eqnarray}
We next use (\ref{16}) to construct the representation of $\zeta (s,a,b,c)$ valid in the whole complex plane.

\section{Analytic Continuation}\label{sec3}

The integration contour $\gamma$ in (\ref{16}) consists of a union of three distinct paths, namely $\gamma=\gamma_{+}\cup C_{\varepsilon}\cup\gamma_{-}$, where $C_{\varepsilon}$
is the circular portion of radius $\varepsilon$ centered at the origin, $\gamma_{+}$ is the straight path positioned at a distance $\delta$ above the negative real axis, and
$\gamma_{-}$ represents the straight path positioned at a distance $\delta$ below the negative real axis. Furthermore, for later use, we denote by $-\tilde{\varepsilon}$
the projection on the negative real axis of the intersection of $\gamma_{+}$ (or $\gamma_{-}$) with the circular portion $C_{\varepsilon}$ of the integration path.
This remark allows us to rewrite (\ref{16}) as
\begin{equation}\label{17}
  \zeta (s,a,b,c)=-\frac {\rm i} {2\pi} c^{-s} \int_{C_{\varepsilon}} \diff u \,\, u^{-s}
\frac \diff {\diff u} \ln \vartheta_1 \left( \pi \left[u-\frac a c\right]
;q\right)-\frac {\rm i} {2\pi} c^{-s} \int_{\gamma_{+}\cup\gamma_{-}} \diff u \,\, u^{-s}
\frac \diff {\diff u} \ln \vartheta_1 \left( \pi \left[u-\frac a c\right]
;q\right)\;.
\end{equation}
The last representation is a suitable starting point from which we can proceed with the analytic continuation to the region $\Re s\leq 2$.

The first term in (\ref{17}), namely the integral along $C_\varepsilon$, is left unchanged; it will be dealt with later.
The second term in (\ref{17}) can be expressed as a sum $\mathcal{I}_{\gamma_{+}}+\mathcal{I}_{\gamma_{-}}$,
where $\mathcal{I}_{\gamma_{+}}$ represents the integral over the path $\gamma_{+}$ and $\mathcal{I}_{\gamma_{-}}$ the one over the path $\gamma_{-}$.
We will present details for the integral $\mathcal{I}_{\gamma_{+}}$, $\mathcal{I}_{\gamma_-}$ follows accordingly.
We parameterize the path $\gamma_{+}$ as $u={\rm i}\delta +x$, with $x\in(-\infty,-\tilde\varepsilon ]$, and rewrite
the integral over $\gamma_{+}$ as follows
\begin{eqnarray}\label{19}
  \mathcal{I}_{\gamma_{+}}&=&-\frac {\rm i} {2\pi} c^{-s} \int_{\gamma_{+}} \diff u \,\, u^{-s}
\frac \diff {\diff u} \ln \vartheta_1 \left( \pi \left[u-\frac a c\right]
;q\right)\nonumber\\
&=&-\frac {\rm i} {2\pi} c^{-s}\int_{-\infty}^{-\tilde\varepsilon}\diff x\,\,(x+{\rm i}\delta )^{-s}\frac \diff {\diff x} \ln \vartheta_1 \left( \pi \left[x+{\rm i}\delta -\frac a c\right]
;q\right)\nonumber\\
&=&\frac {\rm i} {2\pi} c^{-s}\int^{\infty}_{\tilde\varepsilon}\diff x\,\,({\rm i}\delta -x)^{-s}\frac \diff {\diff x} \ln \vartheta_1 \left( \pi \left[{\rm i}\delta -x-\frac a c\right]
;q\right)\;.
\end{eqnarray}
The integral in the last line of (\ref{19}) is well defined at the lower limit of integration. To establish the convergence at the upper limit of integration it seems
we would need to analyze the asymptotic behavior of the integrand as $x\to\infty$. However, this analysis
can be avoided by using the following quasi-periodicity property
\beq\label{20}
\vartheta_1 (u+\pi ;q) = -\vartheta_1 (u;q)\;,
\eeq
which is obtained by setting $n=0$ and $m=1$ in the more general formula (\ref{prop1}).

In order to exploit the above property we represent $\mathcal{I}_{\gamma_{+}}$ as
\begin{equation}\label{21}
  \mathcal{I}_{\gamma_{+}}=\frac {\rm i} {2\pi} c^{-s}\sum_{n=0}^{\infty}\int_{n+\tilde\varepsilon}^{n+1+\tilde\varepsilon}
  \diff x\,\,({\rm i}\delta -x)^{-s}\frac \diff {\diff x} \ln \vartheta_1 \left( \pi \left[{\rm i}\delta -x-\frac a c\right]
;q\right)\;,
\end{equation}
and perform the change of variables $x=y+n$ to obtain
\begin{equation}\label{22}
  \mathcal{I}_{\gamma_{+}}=\frac {\rm i} {2\pi} c^{-s}\sum_{n=0}^{\infty}\int_{\tilde\varepsilon}^{1+\tilde\varepsilon}
  \diff y\,\,({\rm i}\delta -y-n)^{-s}\frac \diff {\diff y} \ln \vartheta_1 \left( \pi \left[{\rm i}\delta -y-n-\frac a c\right]
;q\right)\;.
\end{equation}
Next we send $\delta \to 0$, in which case $\tilde \varepsilon \to \varepsilon$. By using the quasi-periodicity (\ref{20}) in the expression (\ref{22}), we then have
\begin{equation}\label{23}
  \mathcal{I}_{\gamma_{+}}=\frac {\rm i} {2\pi} c^{-s}e^{-{\rm i}\pi s}\sum_{n=0}^{\infty}\int_{\varepsilon}^{1+\varepsilon}
  \diff y\,\,(y+n)^{-s}\frac \diff {\diff y} \ln \vartheta_1 \left( \pi \left[-y-\frac a c\right]
;q\right)\;.
\end{equation}
We note, at this point, that the series $\sum_{n=0}^{\infty}(y+n)^{-s}$ converges uniformly in $y$ in the interval $[\varepsilon ,1+\varepsilon ]$ for $\Re s>1$. This allows us to rewrite (\ref{23})
as
\begin{equation}\label{24}
  \mathcal{I}_{\gamma_{+}}=\frac {\rm i} {2\pi} c^{-s}e^{-{\rm i}\pi s}\int_{\varepsilon}^{1+\varepsilon }
  \diff y\,\,\zeta_{H}(s,y)\frac \diff {\diff y} \ln \vartheta_1 \left( \pi \left[-y-\frac a c\right]
;q\right)\;,
\end{equation}
where the Hurwitz zeta function $\zeta_{H}(s,x)$ in equation (\ref{zetahur}) has been used. As is well known, the Hurwitz zeta function
can be analytically continued in a unique way to a meromorphic function in the entire complex plane possessing only a simple pole with residue $1$ at the point $s=1$.

For the integral $\mathcal{I}_{\gamma_{-}}$ over the lower part of the contour we proceed similarly to obtain
\beq\label{25}
\mathcal{I}_{\gamma_{-}}=-\frac {\rm i} {2\pi} c^{-s} e^{{\rm i}\pi s} \int_\varepsilon^{1+\varepsilon} dy \,\,
\zeta_H (s;y) \frac {\diff} {\diff y} \ln \vartheta_1
        \left( \pi \left[-y -\frac a c \right];q\right)\;.
\eeq
By adding the contribution from $\mathcal{I}_{\gamma_{+}}$ and $\mathcal{I}_{\gamma_{-}}$ we obtain the result
\beq
\mathcal{I}_{\gamma_{+}\cup\gamma_{-}}=\mathcal{I}_{\gamma_{+}}+\mathcal{I}_{\gamma_{-}} = \frac{\sin (\pi s)} \pi c^{-s}
  \int_\varepsilon^{1+\varepsilon} dy \,\, \zeta_H (s;y) \frac d {dy} \ln\vartheta_1 \left(\pi
        \left[y+\frac a c\right];q\right) .\label{26}
\eeq
Noticing that the integral along $C_\varepsilon$ in (\ref{17}) is defined for $s\in \mathbb{C}$,
the above expression performs the analytic continuation of $\zeta(s,a,b,c)$ in terms of the Hurwitz zeta function $\zeta_{H}(s,y)$ to the full complex plane and it is valid for $\varepsilon > 0$ small enough. By choosing $\Re s <1$, the limit $\varepsilon \to 0$ can be performed and the integral along $C_\varepsilon$ in (\ref{17}) can be seen to vanish.

%
We can now summarize the results obtained in this Section as follows:
\begin{theorem}
  Let $\zeta(s,a,b,c)$ denote the doubly-periodic Barnes zeta function defined in (\ref{1}). \newline
  If $\Re s <1$, then
  \begin{equation}\label{claim1}
    \zeta(s,a,b,c)=\frac{\sin (\pi s)} \pi c^{-s}
  \int_0^1 dy \,\, \zeta_H (s;y) \frac \diff {\diff y} \ln\vartheta_1 \left(\pi
        \left[y+\frac a c\right];q\right)\;.
  \end{equation}
  If $s=n$ with $n\in\mathbb{N}_{+}-\{1\}$, then
  \begin{equation}\label{claim2}
    \zeta(n,a,b,c)=\left(\frac \pi c \right) ^n \frac 1 {(n-1)!}
      \frac{\diff^n}{\diff u^n} \ln \vartheta_1 (u;q) \Big|_{u=\pi \frac{a}{c}} \;.
  \end{equation}
\end{theorem}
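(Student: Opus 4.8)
The plan is to read off both statements from the machinery already assembled in this section, since the substantive analytic work has been carried out in deriving (\ref{16}), (\ref{17}), and (\ref{26}); the theorem is essentially a repackaging of those results, and I would treat the two claims separately.

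For the claim (\ref{claim2}), I would simply invoke the computation (\ref{18}). The key observation is that for $s=n$ with $n\in\{2,3,\dots\}$ the factor $u^{-n}$ in the representation (\ref{16}) is single-valued, so the branch cut along the negative real axis disappears and the integrand becomes meromorphic inside $\gamma$, with a single pole of order $n$ located at $u=0$. Since $a/c$ is not a point of the lattice generated by $1$ and $\tau=\mathrm{i}b/c$—guaranteed by $a\in\mathbb{C}\backslash\Omega_{m,n}$ together with the normalization $0<\Re a<c$, $0<\Im a<b$—the factor $\frac{\diff}{\diff u}\ln\vartheta_1(\pi[u-a/c];q)$ is holomorphic near $u=0$. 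Collapsing $\gamma$ onto the circle $C_{\varepsilon}$ and applying the residue theorem (with due attention to the clockwise orientation) then yields (\ref{18}), which is exactly (\ref{claim2}) after invoking that $\vartheta_1$ is odd.

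For the claim (\ref{claim1}), I would combine (\ref{17}) with (\ref{26}). Adding the contour integral along $C_{\varepsilon}$, which is entire in $s$, to the contribution $\mathcal{I}_{\gamma_{+}\cup\gamma_{-}}$ of (\ref{26}) reproduces $\zeta(s,a,b,c)$ for $\Re s>2$; since the $C_{\varepsilon}$ term is entire and $\zeta_H(s,y)$ is meromorphic in $s$, the right-hand side furnishes the meromorphic continuation of $\zeta$ to all of $\mathbb{C}$. To extract (\ref{claim1}) I would then restrict to $\Re s<1$ and send $\varepsilon\to0$. Two facts must be verified in this limit: first, that the integral along $C_{\varepsilon}$ vanishes; and second, that the integration limits $[\varepsilon,1+\varepsilon]$ in (\ref{26}) may be replaced by $[0,1]$.

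The main obstacle is precisely this $\varepsilon\to0$ analysis, and it is where the hypothesis $\Re s<1$ is used \emph{twice}. For the circular arc I would parameterize $u=\varepsilon e^{\mathrm{i}\theta}$; then $|u^{-s}|=O(\varepsilon^{-\Re s})$, the logarithmic derivative of $\vartheta_1$ is bounded near the origin, and the arc length is $O(\varepsilon)$, so the contribution is $O(\varepsilon^{1-\Re s})\to 0$ exactly when $\Re s<1$. For the remaining integral, the only danger in lowering the endpoint to $y=0$ is the behavior of $\zeta_H(s,y)$ as $y\to 0$; isolating the $n=0$ term of the defining series shows $\zeta_H(s,y)\sim y^{-s}$ there, and since the theta factor stays smooth on $[0,1]$ (the argument $\pi[y+a/c]$ avoids the zeros of $\vartheta_1$ by the same lattice argument used above), the integrand is $O(y^{-\Re s})$, which is integrable near $y=0$ precisely for $\Re s<1$. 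Granting these two estimates, passing to the limit $\varepsilon\to 0$ produces (\ref{claim1}) and completes the proof.
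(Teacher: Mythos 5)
Your proposal is correct and takes essentially the same route as the paper: the claim (\ref{claim2}) is exactly the residue-theorem computation (\ref{18}) applied to the representation (\ref{16}) once the branch cut disappears at integer $s$ (including the final sign flip via the oddness of $\vartheta_1$), and the claim (\ref{claim1}) is obtained, as in the text, by reading (\ref{17}) together with (\ref{26}) as the meromorphic continuation of $\zeta(s,a,b,c)$ and then sending $\varepsilon\to 0$ under the hypothesis $\Re s<1$. Your two limiting estimates---the $O(\varepsilon^{1-\Re s})$ bound on the $C_{\varepsilon}$ arc and the $O(y^{-\Re s})$ integrability of the integrand at $y=0$---are precisely the facts the paper asserts without detail, so your write-up is, if anything, slightly more explicit than the original.
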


Due to the prefactor $\sin (\pi s)/\pi$ in
the integral representation (\ref{claim1}), the substitution $s=-n$ with $n\in\mathbb{N}_{0}$ leads to a vanishing contribution of $\zeta(s,a,b,c)$ when the first argument is a negative integer including zero. This remark proves the following
\begin{corollary}
  Let $n\in\mathbb{N}_{0}$, then $\zeta(-n,a,b,c)=0$.
\end{corollary}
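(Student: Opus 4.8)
The plan is to read the statement directly off the integral representation (\ref{claim1}) of the Theorem, viewing its right-hand side as a product of the elementary prefactor $\sin(\pi s)/\pi$ and the remaining factor
\begin{equation*}
  F(s) := c^{-s}\int_0^1 \diff y \,\, \zeta_H(s;y)\,\frac{\diff}{\diff y}\ln\vartheta_1\left(\pi\left[y+\frac{a}{c}\right];q\right)\;.
\end{equation*}
First I would note that every point $s=-n$ with $n\in\mathbb{N}_0$ satisfies $\Re s\leq 0<1$, so (\ref{claim1}) is genuinely valid there. Since $\sin(\pi s)$ vanishes at every integer, the prefactor vanishes at $s=-n$, and the whole assertion reduces to confirming that $F(-n)$ is a finite number, so that no compensating pole cancels this zero.

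The crux is therefore to verify that $F$ is regular at $s=-n$, which I would establish in two independent parts. For the $s$-dependence, the only potentially singular factor is $\zeta_H(s;y)$; but, as recalled after (\ref{24}), the Hurwitz zeta function is meromorphic with its sole pole at $s=1$, hence analytic throughout $\Re s<1$ and in particular at $s=-n$, where it reduces to a Bernoulli polynomial in $y$ and is thus bounded on $[0,1]$. For the $y$-dependence, I would check that $\frac{\diff}{\diff y}\ln\vartheta_1(\pi[y+a/c];q)$ is continuous and bounded on the compact interval $[0,1]$, which holds as soon as $\vartheta_1(\pi[y+a/c];q)$ has no zeros for real $y\in[0,1]$. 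The zeros of $\vartheta_1(\cdot;q)$ lie on the lattice $(m+n\tau)\pi$ with $\tau=\mathrm{i}b/c$; imposing $\pi[y+a/c]=(m+n\tau)\pi$ would force $\Im(a/c)=nb/c$ for some integer $n$, which is impossible under the standing assumption $0<\Im a<b$ (equivalently $0<\Im(a/c)<b/c$). Hence the integrand at $s=-n$ is a product of two factors, each continuous and bounded on $[0,1]$, so $F(-n)$ is finite.

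Combining these two observations I would conclude
\begin{equation*}
  \zeta(-n,a,b,c)=\frac{\sin(-\pi n)}{\pi}\,F(-n)=0\;,
\end{equation*}
which is the claim. The only genuinely delicate point is the regularity of $F$ at $s=-n$: one must be certain that neither the analytically continued $\zeta_H$ nor the logarithmic derivative of the theta function introduces a singularity that could cancel the zero of $\sin(\pi s)$. Both concerns are settled by facts already in place in the paper, namely the pole structure of the Hurwitz zeta function and the position of the zeros of $\vartheta_1$ relative to the admissible range of $a$, so no additional machinery is needed.
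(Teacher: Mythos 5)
Your proposal is correct and follows exactly the paper's own argument: apply the representation (\ref{claim1}), valid for $\Re s<1$, and observe that the prefactor $\sin(\pi s)/\pi$ vanishes at $s=-n$. Your additional verification that the integral factor stays finite at $s=-n$ (via the pole structure of $\zeta_H$ and the location of the zeros of $\vartheta_1$ relative to $0<\Im(a/c)<b/c$) is a point the paper leaves implicit, but it is a refinement of the same proof rather than a different route.
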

In order to find the representation of $\zeta (s,a,b,c)$ valid for $\Re s >1$, note that for $y\to 0$ we have the behavior
\begin{equation}
\zeta_H (s,y) = \frac 1 {y^s} + \zeta_R (s) + {\cal O} (y)\;,
\end{equation}
where the first term results from the $n=0$ contribution in (\ref{zetahur}). This is the term responsible for the restriction $\Re s <1$ found in the previous results. The analytic continuation of (\ref{claim1})
to the full plane is found by observing that
\begin{equation}\label{add1}
\zeta_H (s,y) = \frac 1 {y^s} + \zeta_H (s,y+1)\;.
\end{equation}
By substituting (\ref{add1}) in the representation (\ref{claim1}) we find that the second term in (\ref{add1}) gives a function holomorphic for $s\in\mathbb{C}$. The resulting contribution to (\ref{claim1}) coming from the first term in (\ref{add1})
gives, instead, the integral
\begin{equation}\label{add2}
I_o = \int\limits_0^1 dy \,\, y^{-s} \frac d {dy} \ln \vartheta_1 \left(\pi
        \left[y+\frac a c\right];q\right)\;,
\end{equation}
which is valid for $\Re s <1$. At this point, partial integration can be applied repeatedly to obtain a representation that extends expression 
(\ref{add2}) further to the right of $\Re s=1$. For example, after one partial integration (\ref{add2}) reads
\beq
I_0 &=& \frac 1 {1-s} \left\{ \left.\frac d {dy} \ln \vartheta_1 \left(\pi
        \left[y+\frac a c\right];q\right)\right|_{y=1} -\int\limits_0^1 dy \,\, y^{-s+1} \frac{d^2}{dy^2} \ln \vartheta_1 \left(\pi
        \left[y+\frac a c\right];q\right) \right\},\label{extrep}\eeq
which is now valid for $\Re s<2$. After $n$ partial integrations a representation valid for $\Re s < n+1$ is found and (\ref{claim2}) can be verified from there.
The above representation also shows
\begin{equation}\label{claim3}
   \zeta(1,a,b,c)=\frac \pi c
      \frac{\diff}{\diff u} \ln \vartheta_1 (u;q) \Big|_{u=\pi \frac{a}{c}}-\frac 1 c
\ln \frac{\vartheta_1 \left( \pi \left(1+\frac{a}{c}\right); q\right)}{\vartheta_1 \left(\pi \frac{a}{c};
q\right)} ,
\end{equation}
where the first term comes from $I_0$, and the second term from the pole of $\zeta (s;y+1)$ at $s=1$.
The second term can be simplified noting
\beq\label{32}
\ln \frac{\vartheta_1 \left( \pi \left[ 1 +\frac a c \right] ;q\right) }
        {\vartheta_1 \left( \pi\frac a c ;q\right) } = -i\pi\;,
\eeq
which can be proved starting from the property (\ref{20}). 

\section{The derivative of the doubly-periodic Barnes zeta function at negative integers}\label{sec4}

In addition to the results already presented, from the integral representation (\ref{claim1}) one can compute the derivative of the doubly-periodic Barnes zeta function with respect to the first variable.
Differentiating (\ref{claim1}) leads to the result
\begin{eqnarray}\label{27}
  \zeta'(s,a,b,c)&=&\cos(\pi s)c^{-s}
  \int_0^1 dy \,\, \zeta_H (s;y) \frac \diff {\diff y} \ln\vartheta_1 \left(\pi
        \left[y+\frac a c\right];q\right)\nonumber\\
        &-&\frac{\sin (\pi s)} \pi c^{-s}\ln c
  \int_0^1 dy \,\, \zeta_H (s;y) \frac \diff {\diff y} \ln\vartheta_1 \left(\pi
        \left[y+\frac a c\right];q\right)\nonumber\\
        &+&\frac{\sin (\pi s)} \pi c^{-s}
  \int_0^1 dy \,\, \zeta'_H (s;y) \frac \diff {\diff y} \ln\vartheta_1 \left(\pi
        \left[y+\frac a c\right];q\right)\;,
\end{eqnarray}
where, here and in the rest of this work, the prime indicates differentiation with respect to the variable $s$.
By setting $s=0$ in (\ref{27}) and by noting that for $n\in\mathbb{N}_{0}$ the following relation holds \cite{gradshtein07}
\begin{equation}\label{28}
  \zeta_{H}(-n,x)=-\frac{B_{n+1}(x)}{n+1}\;,
\end{equation}
with $B_{n}(x)$ denoting the Bernoulli polynomials defined in terms of the Bernoulli numbers $B_{k}$ as
\begin{equation}\label{29}
B_{n}(x)=\sum_{k=0}^{n}\binom{n}{k}B_{k}x^{n-k}\;,
\end{equation}
we find
\beq
\zeta ' (-n,a,b,c)= (-1)^{n+1}c^{n}\sum_{k=0}^{n+1}\binom{n+1}{k}\frac{B_{k}}{n+1}\int_0^1 \diff y \,\, y^{n-k+1} \frac \diff {\diff y}
         \ln \vartheta_1 \left(\pi \left[ y +\frac a c \right] ;q\right)\;.\label{pre}
\eeq

Integrating by parts (\ref{pre}) yields the result
\begin{eqnarray}\label{30}
\zeta ' (-n,a,b,c)&=& \frac{(-1)^{n+1}c^{n}}{n+1}\ln\vartheta_{1}\left(\pi\left[\frac{a}{c}+1\right];q\right)\sum_{k=0}^{n}\binom{n+1}{k}B_{k}
+\frac{(-1)^{n+1}c^{n}}{n+1}B_{n+1}\ln \frac{\vartheta_1 \left( \pi \left[ 1+\frac{a}{ c} \right];q
\right)} {\vartheta_1 \left( \pi \frac{a}{ c} ; q\right)}\nonumber\\
&-&\frac{(-1)^{n+1}c^{n}}{n+1}\sum_{k=0}^{n}\binom{n+1}{k}(n-k+1)B_{k}\int_0^1 \diff y \,\, y^{n-k}
         \ln \vartheta_1 \left(\pi \left[ y +\frac a c \right] ;q\right)\;.
\end{eqnarray}
At this point it is convenient to distinguish between two cases: $n=0$ and $n\geq 1$. The reason for this distinction lies in the relation \cite{gradshtein07}
\begin{equation}\label{31}
  \sum_{k=0}^{n}\binom{n+1}{k}B_{k}=\left\{\begin{array}{ll}
  0 & \textrm{for}\;\; n\geq 1\\
  B_{0} & \textrm{for}\;\; n=0\;.
  \end{array}\right.
\end{equation}
By exploiting the above relation and (\ref{32}), one obtains
\begin{eqnarray}\label{33}
  \zeta ' (0,a,b,c)= -\ln \vartheta_1 \left(\pi \frac a   c ; q\right)+\frac{{\rm i}\pi}{2}+\int_{0}^{1}\diff y\,\,
         \ln \vartheta_1 \left(\pi \left[ y +\frac a c \right] ;q\right)\;,
\end{eqnarray}
and, when $n\geq 1$,
\begin{eqnarray}\label{34}
  \zeta ' (-n,a,b,c)= {\rm i}\pi\frac{(-1)^{n}c^{n}}{n+1} B_{n+1}+\frac{(-1)^{n}c^{n}}{n+1}\sum_{k=0}^{n}\binom{n+1}{k}(n-k+1)B_{k}\int_{0}^{1}\diff y\,\, y^{n-k}
         \ln \vartheta_1 \left(\pi \left[ y +\frac a c \right] ;q\right)\;.
\end{eqnarray}

The integrals that appear in (\ref{33}) and (\ref{34}) can be computed from the results of the following lemma.
\begin{lemma}
  For $0<\Im z <\pi \Im \tau$ with $q=e^{i\pi \tau}$, one has a Fourier-type representation
  \beq\label{35}
\ln \vartheta_1 (z;q) = \frac 1 6 \ln \, q +\ln \eta (\tau) + \ln (2\sin z)
 -2\sum_{n=1}^\infty \frac{q^{2n}} {1-q^{2n}} \frac{\cos (2nz)} n \;.
\label{fourier}
\eeq
\end{lemma}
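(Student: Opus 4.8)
The plan is to start directly from the infinite-product representation (\ref{11}) of $\vartheta_1$, which already appears in the excerpt, and to take its logarithm factor by factor. Writing the nome as $q$ in place of $\tau$ in (\ref{11}) and applying $\ln$ gives
\[
\ln\vartheta_1(z;q) = \ln(2\sin z) + \tfrac14\ln q + \sum_{n=1}^\infty\ln\!\left(1-q^{2n}\right) + \sum_{n=1}^\infty\ln\!\left(1-2q^{2n}\cos(2z)+q^{4n}\right),
\]
where I have merged $\ln 2 + \ln\sin z$ into $\ln(2\sin z)$. The first step is to collect the pieces independent of $z$: recalling the product formula for the Dedekind eta function $\eta(\tau) = q^{1/12}\prod_{n=1}^\infty(1-q^{2n})$ with $q = e^{i\pi\tau}$, so that $\ln\eta(\tau) = \tfrac1{12}\ln q + \sum_{n\ge1}\ln(1-q^{2n})$, one checks the arithmetic $\tfrac14\ln q + \sum_{n\ge1}\ln(1-q^{2n}) = \tfrac16\ln q + \ln\eta(\tau)$. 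This already reproduces the first three terms on the right-hand side of (\ref{fourier}), so only the last sum remains to be converted into the stated trigonometric series.

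To carry out that conversion I would factor each quadratic factor as
\[
1-2q^{2n}\cos(2z)+q^{4n} = \left(1-q^{2n}e^{2iz}\right)\left(1-q^{2n}e^{-2iz}\right),
\]
and expand each logarithm via the Mercator series $\ln(1-w) = -\sum_{k\ge1}w^k/k$, obtaining $\ln(1-2q^{2n}\cos(2z)+q^{4n}) = -2\sum_{k\ge1}q^{2nk}\cos(2kz)/k$. Summing over $n$ and interchanging the two summations, the inner geometric series $\sum_{n\ge1}(q^{2k})^n = q^{2k}/(1-q^{2k})$ collapses the double sum to $-2\sum_{k\ge1}\frac{q^{2k}}{1-q^{2k}}\frac{\cos(2kz)}{k}$, which after relabelling $k\to n$ is exactly the series appearing in (\ref{fourier}).

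The one place demanding care, and the main (if modest) obstacle, is the justification of these series manipulations, which is precisely where the hypothesis $0<\Im z<\pi\Im\tau$ enters. The Mercator expansion of $\ln\!\left(1-q^{2n}e^{\pm2iz}\right)$ requires $\left|q^{2n}e^{\pm2iz}\right|<1$; with $|q|=e^{-\pi\Im\tau}$ this amounts to $|\Im z|<\pi n\,\Im\tau$, whose binding case $n=1$ furnishes the upper bound $\Im z<\pi\Im\tau$, while the constraint $\Im z>0$ (together with $\Im\tau>0$) disposes of the lower bound coming from the $e^{2iz}$ factor. On this open strip all the series converge absolutely and uniformly on compact subsets, so the rearrangement and the term-by-term summation over $n$ are legitimate, and the branch of the logarithm is fixed so that (\ref{fourier}) holds throughout the connected strip. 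I would note that this Fourier-type expansion is classical and could simply be cited, but the derivation above from (\ref{11}) is entirely self-contained.
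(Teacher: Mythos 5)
Your proof is correct, but it follows a genuinely different route from the paper's. The paper starts from the cited series for the logarithmic derivative,
$\frac{\diff}{\diff z}\ln\vartheta_{1}(z;q)=\cot z+4\sum_{n\geq 1}\frac{q^{2n}}{1-q^{2n}}\sin(2nz)$,
anti-differentiates it to obtain the Fourier series plus an unknown integration constant $f(q)$, and then determines $f(q)$ by comparing with the infinite product representation (\ref{38}) and the definition (\ref{41}) of the Dedekind eta function. You instead work entirely from the product representation (\ref{11}): you take logarithms, factor each quadratic as $1-2q^{2n}\cos(2z)+q^{4n}=\left(1-q^{2n}e^{2\mathrm{i}z}\right)\left(1-q^{2n}e^{-2\mathrm{i}z}\right)$, expand by the Mercator series, and resum the geometric series in $n$ to produce the Lambert-type coefficients $q^{2k}/(1-q^{2k})$. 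Your route is more self-contained: it requires no input beyond the product formulas for $\vartheta_{1}$ and $\eta$, it produces the $z$-independent terms and the Fourier series in one stroke rather than via a constant-matching argument, and it makes transparent exactly where the strip hypothesis $0<\Im z<\pi\Im\tau$ enters (convergence of the Mercator expansions and legitimacy of the interchange of summations, both of which you verify correctly). Indeed, your double-series computation is precisely the identity that underlies the paper's writing of both antiderivative forms in (\ref{37}) with one and the same constant $f(q)$, so your argument can be read as supplying the analytic detail that the paper delegates to the references. What the paper's approach buys in exchange is brevity: it reuses the log-derivative expansion (\ref{12})/(\ref{36}) already quoted and needed elsewhere in the text, so only the determination of $f(q)$ remains to be done.
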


\begin{proof}
  The logarithmic derivative of the first Jacobi theta function can be represented in terms of an infinite series for $0<\Im z <\pi \Im \tau$
  as \cite{gradshtein07,erdelyi53}
  \begin{equation}\label{36}
    \frac{\diff}{\diff z}\ln\vartheta_{1}(z;q)=\cot z+4\sum_{n=1}^{\infty}\frac{q^{2n}}{1-q^{2n}}\sin(2 n z)=\cot z+4\sum_{n=1}^{\infty}\frac{q^{2n}\sin(2z)}{1-2q^{2n}\cos(2z)+q^{4n}}\;.
  \end{equation}
  Anti-differentiation of (\ref{36}) yields
  \begin{equation}\label{37}
   \ln\vartheta_{1}(z;q)=f(q)+\ln\sin z-2 \sum_{n=1}^{\infty}\frac{q^{2n}}{1-q^{2n}}\frac{\cos(2 n z)}{n}=f(q)+\ln\sin z+\sum_{n=1}^{\infty}\ln\left(1-2q^{2n}\cos(2z)+q^{4n}\right)\;,
  \end{equation}
  where $f(q)$ is an arbitrary function. In order to determine the unknown $f(q)$ we use the infinite product representation \cite{gradshtein07,erdelyi53}
  \begin{equation}\label{38}
    \vartheta_{1}(z;q)=2G(q)q^{\frac{1}{4}}\sin z\prod_{n=1}^{\infty}\left(1-2q^{2n}\cos(2z)+q^{4n}\right)\;,
  \end{equation}
  where
  \begin{equation}\label{39}
    G(q)=\prod_{n=1}^{\infty}\left(1-q^{2n}\right)\;.
  \end{equation}
  From equation (\ref{38}) and by recalling that the Dedekind eta function is defined for $\Im\tau>0$ and $q=e^{i \pi \tau}$ as
  \begin{equation}\label{41}
    \eta(\tau)=e^{\frac{{\rm i}\pi\tau}{12}}\prod_{n=1}^{\infty}\left(1-q^{2n}\right)\; ,
  \end{equation}
one can easily find
  \begin{equation}\label{40}
    \ln\vartheta_{1}(z;q)=\ln 2+\ln \eta(\tau)+\frac{1}{6}\ln q+\ln\sin z+\sum_{n=1}^{\infty}\ln\left(1-2q^{2n}\cos(2z)+q^{4n}\right)\;.
  \end{equation}
By comparing (\ref{37}) with (\ref{40}) we finally find that
  \begin{equation}\label{42}
    f(q)=\ln 2+\ln \eta(\tau)+\frac{1}{6}\ln q\;.
  \end{equation}
  Substitution of the explicit expression (\ref{42}) for the function $f(q)$ in (\ref{37}) yields the claim (\ref{35}).
\end{proof}

The use of (\ref{fourier}) in the expression (\ref{33}) allows us to obtain the following result for the derivative of the doubly-periodic Barnes zeta function at $s=0$,
\begin{eqnarray}\label{43}
  \zeta ' (0,a,b,c)=-\ln \vartheta_1 \left(\pi \frac a   c ; q\right)+\frac{{\rm i}\pi}{2}+\frac{1}{6}\ln q+\ln\eta\left({\rm i}\frac{b}{c}\right)+\int_0^1 \ln \left(2\sin \left[\pi \left( y+\frac a c \right) \right] \right)\;.
\end{eqnarray}

When $n\geq 1$, we use once again the result (\ref{fourier}) in (\ref{34}). By performing this substitution and by recalling (\ref{31}) one obtains
\begin{eqnarray}\label{44}
  \zeta ' (-n,a,b,c)&=&\frac{(-1)^{n}c^{n}}{n+1}\sum_{k=0}^{n}\binom{n+1}{k}(n-k+1)B_{k}\left[\mathcal{I}_{n-k}\left(\frac{a}{c}\right)-2\sum_{l=1}^{\infty}\frac{q^{2l}} {1-q^{2l}}\frac{1}{l}\mathcal{J}_{n-k,l}\left(\frac{a}{c}\right)\right]\nonumber\\
  &+& {\rm i}\pi\frac{(-1)^{n}c^{n}}{n+1} B_{n+1},
\end{eqnarray}
where we have introduced for typographical convenience the functions
\begin{equation}\label{45}
  \mathcal{I}_{n-k}\left(\frac{a}{c}\right)=\int_{0}^{1}\diff y\,\, y^{n-k} \ln \left(2\sin \left[\pi \left( y+\frac a c \right) \right] \right)\; ,
\end{equation}
\begin{equation}\label{46}
   \mathcal{J}_{n-k,l}\left(\frac{a}{c}\right)=\int_{0}^{1}\diff y\,\, y^{n-k}\cos\left(2\pi l\left[y+\frac{a}{c}\right]\right)\;.
\end{equation}
Note that the integral appearing in (\ref{43}) reduces, according to the definition (\ref{45}), to $\mathcal{I}_{0}\left(a/c\right)$.

The integrals (\ref{45}) and (\ref{46}) can actually be explicitly computed in terms of polylogarithmic functions and trigonometric functions, respectively.
The following lemma provides an expression for the function $\mathcal{I}_{n-k}\left(a/c\right)$.
\begin{lemma}
  For $n\in\mathbb{N}_{0}$ and for $A\in\mathbb{C}\backslash\mathbb{Z}$,
  \begin{eqnarray}\label{48}
    \mathcal{I}_{n}\left(A\right)&=&-\frac{{\rm i}\pi n}{2(n+1)(n+2)}-\frac{{\rm i}\pi A}{n+1}\nonumber\\
    &+&\frac{(-A)^{n+1}}{n+1}\sum_{k=1}^{n+1}\binom{n+1}{k}(-1)^{k}\sum_{l=1}^{k}\frac{k!}{(k-l)!}\left(\frac{{\rm i}}{2\pi A}\right)^{l}\left(\sum_{j=1}^{k-l}\binom{k-l}{j}A^{-j}\right)\textrm{Li}_{l+1}\left(e^{2\pi{\rm i}A}\right)\;.
  \end{eqnarray}
\end{lemma}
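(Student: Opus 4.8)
The plan is to insert the logarithmic Fourier expansion of $\ln(2\sin\cdot)$, integrate termwise, and recognize the resulting $k$-sums as polylogarithms. I would establish the identity first under the auxiliary assumption $\Im A>0$, where every series below converges absolutely and uniformly on $[0,1]$, and then extend it to all $A\in\mathbb{C}\setminus\mathbb{Z}$ by analytic continuation, both sides being holomorphic in $A$ away from the integer lattice.

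Using the elementary factorization $2\sin(\pi w)=\mathrm{i}\,e^{-\mathrm{i}\pi w}\bigl(1-e^{2\pi\mathrm{i}w}\bigr)$, one has for $\Im w>0$
\[
\ln\bigl(2\sin\pi w\bigr)=\frac{\mathrm{i}\pi}{2}-\mathrm{i}\pi w-\sum_{k=1}^{\infty}\frac{e^{2\pi\mathrm{i}kw}}{k}.
\]
Setting $w=y+A$ and integrating the first two terms against $y^{n}$ over $[0,1]$ gives, after the elementary integrals $\int_0^1 y^n\,\diff y$ and $\int_0^1 y^n(y+A)\,\diff y$, exactly $-\mathrm{i}\pi n/[2(n+1)(n+2)]-\mathrm{i}\pi A/(n+1)$, which is the first line of (\ref{48}). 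It then remains to treat the series part $-\sum_{k\ge1}k^{-1}e^{2\pi\mathrm{i}kA}\int_0^1 y^n e^{2\pi\mathrm{i}ky}\,\diff y$, and every further contribution will carry a factor $\sum_{k\ge1}e^{2\pi\mathrm{i}kA}/k^{\,p}=\mathrm{Li}_p(e^{2\pi\mathrm{i}A})$, so only polylogarithms of $e^{2\pi\mathrm{i}A}$ can survive.

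To reach the precise double-sum arrangement I would integrate $\int_0^1 y^n e^{2\pi\mathrm{i}ky}\,\diff y$ by parts once, antidifferentiating $y^n$: this extracts the prefactor $1/(n+1)$, promotes $y^n$ to $y^{n+1}$, and leaves a boundary term contributing $-\tfrac{1}{n+1}\mathrm{Li}_1(e^{2\pi\mathrm{i}A})$. Writing $y^{n+1}=\bigl((y+A)-A\bigr)^{n+1}=\sum_{k=0}^{n+1}\binom{n+1}{k}(y+A)^{k}(-A)^{n+1-k}$ generates the factors $\binom{n+1}{k}$ and $(-A)^{n+1-k}$, and the shift $w=y+A$ turns each summand into $\int_A^{1+A}w^{k}e^{2\pi\mathrm{i}kw}\,\diff w$. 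Repeated integration by parts of this last integral produces the falling factorials $k!/(k-l)!$, the powers $(\mathrm{i}/2\pi)^{l}$, the boundary differences $(1+A)^{k-l}-A^{k-l}$, and, after summation over $k$, the polylogarithms $\mathrm{Li}_{l+1}(e^{2\pi\mathrm{i}A})$; factoring $A^{k-l}$ out of the difference rewrites it as $A^{k-l}\sum_{j=1}^{k-l}\binom{k-l}{j}A^{-j}$, which is exactly the inner sum in (\ref{48}), and redistributing the remaining powers of $A$ converts $(\mathrm{i}/2\pi)^l$ into $(\mathrm{i}/2\pi A)^l$.

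The one step requiring genuine care is the fate of the $l=0$ terms, which carry $\mathrm{Li}_1$ and must vanish to match (\ref{48}). Collecting them, their total coefficient is $\tfrac{1}{n+1}\sum_{k}\binom{n+1}{k}(-A)^{n+1-k}\bigl[(1+A)^{k}-A^{k}\bigr]$, and the binomial theorem evaluates the two constituent sums to $(-A+1+A)^{n+1}=1$ and $(-A+A)^{n+1}=0$; this leaves $\tfrac{1}{n+1}\mathrm{Li}_1$, which cancels precisely the leftover boundary term from the first integration by parts. After relabelling the summation index, one obtains (\ref{48}). I expect the main obstacle to be combinatorial rather than analytic: keeping the three nested indices consistent through the repeated integration by parts and verifying this $\mathrm{Li}_1$ cancellation, whereas the analytic inputs (uniform convergence for $\Im A>0$ and the concluding analytic continuation in $A$) are routine.
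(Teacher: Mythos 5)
Your argument is correct, and it takes a genuinely different route from the paper's. The paper never series-expands the logarithm: it first integrates by parts to produce the cotangent integral (\ref{49}), shifts and binomially expands (\ref{50}), converts the cotangent via $e^{2\pi {\rm i}y}/(e^{2\pi {\rm i}y}-1)=-\tfrac{1}{2\pi{\rm i}}\tfrac{\diff}{\diff y}\textrm{Li}_{1}(e^{2\pi{\rm i}y})$, and climbs the polylogarithm ladder by repeated integration by parts ((\ref{51})--(\ref{57})); the elementary first line of (\ref{48}) emerges only at the very end, through the binomial identity (\ref{59})--(\ref{60}) and the explicit, branch-sensitive evaluation of $\textrm{Li}_{1}$ in (\ref{61})--(\ref{62}), which must recombine with the boundary terms of (\ref{49}). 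You instead expand $\ln(2\sin\pi w)={\rm i}\pi/2-{\rm i}\pi w-\sum_{k\geq1}e^{2\pi{\rm i}kw}/k$ at the outset (valid for $\Im w>0$), integrate termwise, and resum the Fourier modes into polylogarithms after a per-mode integration by parts. This buys real simplifications: the first line of (\ref{48}) appears immediately from the two elementary terms, and the $\textrm{Li}_{1}$ contributions cancel identically against your first boundary term (your $l=0$ binomial computation, giving total coefficient $\tfrac{1}{n+1}(1-0)$, checks out), so the paper's delicate steps (\ref{58a})--(\ref{62}) are bypassed entirely. The price is the auxiliary restriction $\Im A>0$ and the closing continuation argument, which is slightly more delicate than you state: $\textrm{Li}_{l+1}(e^{2\pi{\rm i}A})$ is single-valued holomorphic only off the downward rays $\{\Re A\in\mathbb{Z},\,\Im A\leq 0\}$, so the identity continues (with branches fixed by continuity from the upper half-plane) to that simply connected domain rather than literally to all of $\mathbb{C}\setminus\mathbb{Z}$; since the paper's own proof carries equivalent implicit branch conventions, this is a caveat rather than a gap. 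One notational point: you use $k$ both for the Fourier mode and for the binomial index after writing $y^{n+1}=((y+A)-A)^{n+1}$; keep these indices distinct when writing the argument out in full.
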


\begin{proof}
Performing an integration by parts leads to the expression
  \begin{eqnarray}\label{49}
 \mathcal{I}_{n}\left(A\right)=\frac{\ln 2}{n+1}+\frac{{\rm i}\pi}{n+1}+\frac{1}{n+1}\ln\sin(\pi A)-\frac{\pi}{n+1}\int_{0}^{1}\diff y\,\, y^{n+1}\cot\left(\pi\left[y+A\right]\right)\;.
  \end{eqnarray}
  The integral in (\ref{49}) containing the cotangent can be explicitly computed in terms of polylogarithmic functions.
  First, by exploiting the change of variables $y+A\to y$, we obtain
  \begin{equation}\label{50}
    \int_{0}^{1}\diff y\,\, y^{n+1}\cot\left(\pi\left[y+A\right]\right)=\sum_{k=0}^{n+1}\binom{n+1}{k}(-A)^{n+1-k}\int_{A}^{A+1}\diff y\,\, y^{k}\cot\left(\pi y\right)\;,
  \end{equation}
  and by rewriting the cotangent in terms of complex exponentials the integral on the right-hand side of (\ref{50}) takes the form
\begin{equation}\label{51}
  \int_{A}^{A+1}\diff y\,\, y^{k}\cot\left(\pi y\right)=-{\rm i}\int_{A}^{A+1}\diff y\,\, y^{k}+2{\rm i}\int_{A}^{A+1}\diff y\,\, y^{k}\frac{e^{2\pi{\rm i}y}}{e^{2\pi{\rm i}y}-1}\;.
\end{equation}
This can be rewritten using the polylogarithmic function defined for $|z|< 1$ and $s\in\mathbb{C}$ by the sum
\begin{equation}\label{52}
  \textrm{Li}_{s}(z)=\sum_{n=1}^{\infty}\frac{z^{n}}{n^{s}}\;,
\end{equation}
and by analytic continuation in the entire complex $z$-plane \cite{lewin}. Of particular importance to our analysis is the following property satisfied by the polylogarithmic function \cite{lewin}
\begin{equation}\label{53}
  \frac{1}{2\pi{\rm i}}\frac{\diff}{\diff y}\textrm{Li}_{n}\left(e^{2\pi{\rm i}y}\right)=\textrm{Li}_{n+1}\left(e^{2\pi{\rm i}y}\right)\;,
\end{equation}
and since
\begin{equation}
\textrm{Li}_{0}\left(e^{2\pi{\rm i}y}\right)=-\frac{e^{2\pi{\rm i}y}}{e^{2\pi{\rm i}y}-1}\;,
\end{equation}
we have that
\begin{equation}\label{54}
  -\frac{1}{2\pi{\rm i}}\frac{\diff}{\diff y}\textrm{Li}_{1}\left(e^{2\pi{\rm i}y}\right)=\frac{e^{2\pi{\rm i}y}}{e^{2\pi{\rm i}y}-1}\;.
\end{equation}
The result (\ref{54}) employed in (\ref{51}) yields
\begin{equation}\label{55}
  2{\rm i}\int_{A}^{A+1}\diff y\,\, y^{k}\frac{e^{2\pi{\rm i}y}}{e^{2\pi{\rm i}y}-1}=-\frac{1}{\pi}\int_{A}^{A+1}\diff y\,\, y^{k}\frac{\diff}{\diff y}\textrm{Li}_{1}\left(e^{2\pi{\rm i}y}\right)\;.
\end{equation}
Integrating by parts $k$ times and using, at each step, the relation (\ref{53}) leads to
\begin{eqnarray}\label{56}
  -\frac{1}{\pi}\int_{A}^{A+1}\diff y\,\, y^{k}\frac{\diff}{\diff y}\textrm{Li}_{1}\left(e^{2\pi{\rm i}y}\right)&=&-\frac{1}{\pi}\sum_{n=0}^{k}\frac{k!}{(k-n)!}\frac{(-1)^{n}y^{k-n}}{(2\pi{\rm i})^{n}}\textrm{Li}_{n+1}\left(e^{2\pi{\rm i}y}\right)\Bigg|_{A}^{A+1}\nonumber\\
  &=&-\frac{1}{\pi}\sum_{n=0}^{k}\frac{k!}{(k-n)!}\frac{(-1)^{n}}{(2\pi{\rm i})^{n}}\sum_{l=1}^{k-n}\binom{k-n}{l}A^{k-n-l}\textrm{Li}_{n+1}\left(e^{2\pi{\rm i}A}\right)\;.
\end{eqnarray}
In light of the previous result and after computing the elementary integral on the right-hand side of (\ref{51})
we can conclude that
\begin{equation}\label{57}
  \int_{A}^{A+1}\diff y\,\, y^{k}\cot\left(\pi y\right)=-\frac{{\rm i}}{k+1}\sum_{l=1}^{k+1}\binom{k+1}{l}A^{k+1-l}-\frac{1}{\pi}\sum_{n=0}^{k}\frac{k!}{(k-n)!}\frac{(-1)^{n}}{(2\pi{\rm i})^{n}}\sum_{l=1}^{k-n}\binom{k-n}{l}A^{k-n-l}\textrm{Li}_{n+1}\left(e^{2\pi{\rm i}A}\right)\;.
\end{equation}
By substituting expression (\ref{57}) in (\ref{50}) we obtain, for (\ref{49}),
\begin{eqnarray}\label{58}
  \mathcal{I}_{n}\left(A\right)&=&\frac{\ln 2}{n+1}+\frac{{\rm i}\pi}{n+1}+\frac{1}{n+1}\ln\sin(\pi A)+\frac{{\rm i}\pi (-A)^{n+2}}{n+1}
  \sum_{k=0}^{n+1}\binom{n+1}{k}\frac{(-1)^{k+1}}{k+1}\sum_{l=1}^{k+1}\binom{k+1}{l}A^{-l}\nonumber\\
  &+&\frac{(-A)^{n+1}}{n+1}\sum_{k=0}^{n+1}\binom{n+1}{k}(-1)^{k}\sum_{l=0}^{k}\frac{k!}{(k-l)!}\left(\frac{{\rm i}}{2\pi A}\right)^{l}\left(\sum_{j=1}^{k-l}\binom{k-l}{j}A^{-j}\right)\textrm{Li}_{l+1}\left(e^{2\pi{\rm i}A}\right)\;.
\end{eqnarray}
The above result can be simplified further, in fact
\begin{equation}\label{58a}
  \frac{{\rm i}\pi (-A)^{n+2}}{n+1}
  \sum_{k=0}^{n+1}\binom{n+1}{k}\frac{(-1)^{k+1}}{k+1}\sum_{l=1}^{k+1}\binom{k+1}{l}A^{-l}=\frac{{\rm i}\pi (-A)^{n+2}}{n+1}\sum_{k=0}^{n+1}\binom{n+1}{k}\frac{(-1)^{k+1}}{k+1}\left[\left(1+\frac{1}{A}\right)^{k+1}-1\right]\;,
\end{equation}
and by using the relation, valid for $\alpha\in\mathbb{C}$ \cite{gradshtein07},
\begin{equation}\label{59}
  \sum_{k=0}^{m}\binom{m}{k}\frac{\alpha^{k+1}}{k+1}=\frac{(\alpha+1)^{m+1}-1}{m+1}\;,
\end{equation}
one obtains
\begin{equation}\label{60}
   \frac{{\rm i}\pi (-A)^{n+2}}{n+1}
  \sum_{k=0}^{n+1}\binom{n+1}{k}\frac{(-1)^{k+1}}{k+1}\sum_{l=1}^{k+1}\binom{k+1}{l}A^{-l}=\frac{{\rm i}\pi }{(n+1)(n+2)}\;.
\end{equation}
The $l=0$ contribution from the last term in (\ref{58}) can be simplified by following an argument similar to the one leading from (\ref{58a}) to (\ref{60}).
By noticing that \cite{lewin}
\begin{equation}\label{61}
  \textrm{Li}_{1}\left(z\right)=-\ln(1-z)\;,
\end{equation}
one obtains
\begin{eqnarray}\label{62}
  \frac{(-A)^{n+1}}{n+1}\sum_{k=0}^{n+1}\binom{n+1}{k}(-1)^{k}\sum_{j=1}^{k}\binom{k}{j}A^{-j}\textrm{Li}_{1}\left(e^{2\pi{\rm i}A}\right)=-\frac{{\rm i}\pi A}{n+1}-\frac{\ln 2}{n+1}-\frac{1}{n+1}\ln\sin(\pi A)-\frac{3\pi{\rm i}}{2(n+1)}\;.
\end{eqnarray}
By substituting  (\ref{60}) and (\ref{62}) in the expression (\ref{58}) the claim (\ref{48}) immediately follows.
\end{proof}
Let us now focus on the analysis of the function $\mathcal{J}_{n-k,l}(a/c)$ in (\ref{46}), which can be rewritten, after a suitable change of variables, as
\begin{equation}\label{63}
  \mathcal{J}_{n-k,l}\left(\frac{a}{c}\right)=\frac{1}{(2\pi l)^{k+1}}\cos\left(2\pi l \frac{a}{c}\right)\int_{0}^{2\pi l}\diff x\,x^{k}\cos x
  +\frac{1}{(2\pi l)^{k+1}}\sin\left(2\pi l \frac{a}{c}\right)\int_{0}^{2\pi l}\diff x\,x^{k}\sin x\;.
\end{equation}
The use of known reduction formulae for the trigonometric integrals on the right-hand side of (\ref{63}) yields
\begin{equation}\label{64}
  \int \diff x\, x^{k}\left\{\cos x \atop \sin x\right\}=\left\{\sin x \atop -\cos x\right\}\sum_{l=0}^{\left[\frac{k}{2}\right]}\frac{k!}{(k-2l)!}(-1)^{l}x^{k-2l}
  +\left\{\cos x \atop \sin x\right\}\sum_{l=0}^{\left[\frac{k-1}{2}\right]}\frac{k!}{(k-2l-1)!}(-1)^{l}x^{k-2l-1}\;,
\end{equation}
where $[x]$ denotes the integer part of $x$. The application of the expression (\ref{64}) to (\ref{63}) provides the result
\begin{eqnarray}\label{65}
  \mathcal{J}_{n-k,l}\left(\frac{a}{c}\right)&=&\frac{1}{(2\pi l)^{k+1}}\cos\left(2\pi l \frac{a}{c}\right)\left[\sum_{j=0}^{\left[\frac{k-1}{2}\right]}\frac{k!}{(k-2j-1)!}(-1)^{j}(2\pi l)^{k-2j-1}-\left(\frac{1+(-1)^{k+1}}{2}\right)(-1)^{\left[\frac{k-1}{2}\right]}k!\right]\nonumber\\
  &-&\frac{1}{(2\pi l)^{k+1}}\sin\left(2\pi l \frac{a}{c}\right)\left[\sum_{j=0}^{\left[\frac{k}{2}\right]}\frac{k!}{(k-2j)!}(-1)^{j}(2\pi l)^{k-2j}+\left(\frac{1+(-1)^{k}}{2}\right)(-1)^{\left[\frac{k}{2}\right]}k!\right]\;.
\end{eqnarray}

We can conclude that the equation (\ref{44}), together with the explicit expressions (\ref{48}) and (\ref{65}), gives a formula for computing the derivative
of the doubly-periodic Barnes zeta function at all negative integer points. Moreover, an expression for $\zeta ' (0,a,b,c)$ is obtained from (\ref{43}) by using (\ref{48}) with $n=0$.
The results obtained in this section allow for a very efficient way of computing $\zeta'(-n,a,b,c)$ as the explicit formulae can be easily implemented in an algebraic computer program.
For completeness, we display the expression for $\zeta'(-n,a,b,c)$ when $n=0$, $n=-1$, and $n=-2$.

For $n=0$, one has
\beq
\zeta ' (0,a,b,c)&=& -\ln \vartheta_1 \left(\pi \frac a   c ; q\right)
     +\frac 1 6 \ln q + \ln \eta \left({\rm i}\frac{b}{c}\right) +
      \pi {\rm i} \left( \frac 1 2 -\frac a c \right),\label{final}
\eeq
for $n=-1$, we obtain
\begin{eqnarray}
  \zeta ' (-1,a,b,c)=\frac{c}{2\pi {\rm i}}\textrm{Li}_{2}\left(e^{2\pi {\rm i}\frac{a}{c}}\right)-\frac{c}{\pi}\sum_{l=1}^{\infty}\frac{q^{2l}}{1-q^{2l}}\frac{1}{l^{2}}\sin\left(2\pi l \frac{a}{c}\right)\;,
\end{eqnarray}
and for $n=-2$, one gets
\begin{eqnarray}
\zeta ' (-2,a,b,c)=-\frac{c^{2}}{2\pi^{2} }\textrm{Li}_{3}\left(e^{2\pi {\rm i}\frac{a}{c}}\right)-\frac{c^{2}}{\pi^{2}}\sum_{l=1}^{\infty}\frac{q^{2l}}{1-q^{2l}}\frac{1}{l^{3}}\cos\left(2\pi l \frac{a}{c}\right)\;.
\end{eqnarray}
The above expressions are valid in the range of parameters stated below equation
(\ref{1}) and have to be periodically continued, as is clear from the fact
that the integral representation (\ref{pre}) is periodic.

\section{Concluding Remarks}

In this work we have performed a detailed analysis of the meromorphic extension of the doubly-periodic Barnes zeta function in the entire complex plane.
The contour integral representation (\ref{1c}) allowed us, after a suitable deformation of the integration contour, to obtain the meromorphic extension of
the doubly-periodic Barnes zeta function $\zeta(s,a,b,c)$. This analytically continued expression revealed to be particularly appropriate for the explicit computation of the values
$\zeta (n,a,b,c)$ and of the derivative
$\zeta'(-n,a,b,c)$ for $n\in\mathbb{N}_{0}$.

The process of analytic continuation delineated in this work is quite general and its applicability is not limited exclusively to the study of the doubly-periodic Barnes zeta function.
In fact, the method developed here can be applied to more general elliptic functions. Although the representation of elliptic functions in terms of integrals over $\mathbb{R}^{+}$ has been constructed, for instance in \cite{dienstfrey06}, our method would provide a contour integral representation for the class of elliptic functions. Since we have seen that such representation has some advantages, such as the double-periodicity of the results and the almost straightforward computation of the values and derivative at specific points, it could, perhaps, provide either new results or simplify already known ones regarding elliptic functions.
This seems to be an interesting idea which deserves further investigation.

Aside from their intrinsic mathematical interest, our results can find applications in problems related to physical systems. The Thirring model is used to describe simple interacting field theories \cite{thirring}.
The one-loop partition function for generalized Thirring models is proportional to the derivative at $s=0$ of the doubly-periodic Barnes zeta function \cite{sachs96}.
The result (\ref{final}) obtained here can then be directly applied to the analysis of these interacting field models.

\begin{acknowledgments}
KK acknowledges very fruitful discussions with Stuart Dowker on the subject.
\end{acknowledgments}


\begin{thebibliography}{99}

\bibitem{barnes1901} E. W. Barnes,  The theory of the double gamma function, {\it Phil. Trans. Royal Soc. Lon. Series A}, {\bf 196}, 265–387 (1901)

\bibitem{barnes1904} E. W. Barnes, On the theory of the multiple gamma function, {\it Trans. Cambridge Philos. Soc.} {\bf 19}, 374-425 (1904)

\bibitem{dienstfrey06} A. Dienstfrey and J. Huang, Integral representation for elliptic functions, {\it J. Math. Anal. Appl.} {\bf 316}, 142-160 (2006)

\bibitem{elizalde07} E. Elizalde, Some analytic continuations of the Barnes zeta function in two and higher dimensions, {\it Appl. Math. Comp.} {\bf 187}, 141 (2007)

\bibitem{erdelyi53}  Erd\'{e}lyi A.,
{\it Higher Transcendental Functions}, Vol. II,
Bateman Project Staff (New York, McGraw-Hill) (1953)

\bibitem{gradshtein07} I. S.  Gradshtein and I. M. Ryzhik ,
{\it Table of Integrals, Series and Products}, Eds. A. Jeffrey and D.
Zwillinger (Oxford: Academic) (2007)


\bibitem{lewin} Lewin L., {\it Polylogarithms and Associated Functions}, Elsevier Science (New York: North-Holland) (1981)

\bibitem{sachs96} Sachs I. and Wipf A., Generalized Thirring models, {\it Ann. Phys.} {\bf 249}, 380 (1996)

\bibitem{shen} Shen L.C., On the logarithmic derivative of a theta function and a fundamental identity of Ramanujan, {\it J. Math. Anal. Appl.} {\bf 177}, 299-307 (1993)

\bibitem{thirring} Thirring W. E., A soluble relativistic field theory, {\it Ann. Phys.} {\bf 3}, 91 (1958)

\bibitem{whitt27} Whittaker E. T. and Watson G. N., {\it A Course of Modern Analysis}, Cambridge University Press (1927)

\end{thebibliography}
\end{document}